\long\def\symbolfootnote[#1]#2{\begingroup%
\def\thefootnote{\fnsymbol{footnote}}\footnote[#1]{#2}\endgroup}
\newtheorem{theorem}{Theorem}
\newtheorem{lemma}{Lemma}
\newcommand{\ignore}[1]{}
\newcommand{\bs}{\backslash}
\newlength{\algobox}
\newenvironment{proofof}[1]{\smallskip\noindent{\bf Proof of #1.}}%
        {\hspace*{\fill}$\Box$\par}
\def\E{{\bf E}}
\def\lp{{\tt lp}}
\def\full{{\tt F}}
\def\Drop{{\tt Drop}}
\def\drop{{\tt drop}}
\def\Loss{{\tt Loss}}
\def\loss{{\tt loss}}
\def\MST{{\tt MST}}
\def\mst{{\tt mst}}
\def\CL{{\tt LC}}
\def\RL{{\tt RLC}}
\definecolor{foocite}{rgb}{0,0.75,0}
\definecolor{foolink}{rgb}{0,0,1}
\definecolor{foourl}{rgb}{1,0,0}
\begin{document}

\title{Integrality Gap of the Hypergraphic Relaxation of Steiner Trees: a short proof of a $1.55$ upper bound}
\author{ Deeparnab Chakrabarty \and Jochen K\"onemann \and  David Pritchard}
\date{\today}
\maketitle

\begin{abstract}
Recently, Byrka et al.~\cite{BGRS10} gave a $1.39$-approximation for
the Steiner tree problem, using a hypergraph-based LP relaxation. They
also upper-bounded its integrality gap by $1.55$. We describe a shorter
proof of the same integrality gap bound, by applying some of their techniques
to a randomized loss-contracting algorithm.
\end{abstract}

\section{Introduction}
In the Steiner tree problem, we are given an undirected graph
$G=(V,E)$ with costs $c$ on edges and its vertex set partitioned into
terminals (denoted $R \subset V$) and Steiner vertices ($V\setminus
R$). A \emph{Steiner tree} is a tree spanning all of $R$ plus any
subset of $V \bs R$, and the problem is to find a minimum-cost such
tree.  The Steiner tree problem is {$\mathsf{APX}$}-hard, thus the
best we can hope for is a constant-factor approximation algorithm.

The best known ratio is a result of Byrka, Grandoni, Rothvo{\ss} and
Sanit\`{a} \cite{BGRS10}: their randomized iterated rounding algorithm
gives approximation ratio $\ln(4)+\epsilon \approx 1.39$. The prior
best was a $1+\frac{\ln 3}{2}+\epsilon \approx 1.55$ ratio, via the
deterministic loss-contracting algorithm of Robins and
Zelikovsky~\cite{RZ05}. The algorithm of \cite{BGRS10}  differs from
previous work in that it uses a linear programming (LP)
relaxation; the LP is based on hypergraphs, and it has several
different-looking but equivalent~\cite{CKP10,PVd03} nice
formulations. A second result of~\cite{BGRS10} concerns the LP's
\emph{integrality gap}, which is defined as the worst-case ratio (max
over all instances) of the optimal Steiner tree cost to the LP's
optimal value. Byrka et al.~show the integrality gap is at most
$1.55$, and their proof builds on the analysis of
\cite{RZ05}. In this note we give a shorter proof of the same bound
using a simple LP-rounding algorithm.

%
%

\begin{figure}
\begin{center}
\includegraphics[scale=.75]{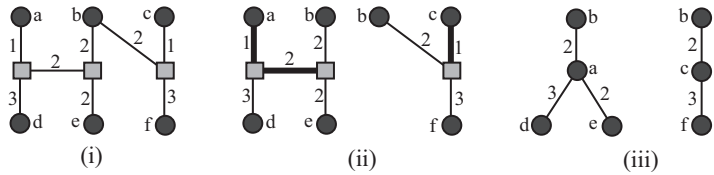}
\end{center}
\vspace{-0.5cm}
\caption{In (i) we show a Steiner tree; circles are terminals and squares are Steiner nodes. In (ii) we show its decomposition into full components, and their losses in bold. In (iii) we show the full components after loss contraction.}\label{fig:fc}
\end{figure}

We now describe one formulation for the hypergraphic LP. Given a set
$K \subset R$ of terminals, a {\em full component} on $K$ is a tree
whose leaf set is $K$ and whose internal nodes are Steiner vertices.
Every Steiner tree decomposes in a unique edge-disjoint way into full
components; Figure~\ref{fig:fc}(i) shows an example. Moreover, one
can show that a set of full components on sets $(K_1,\dotsc,K_r)$
forms a Steiner tree if and only if the hypergraph $(V,
(K_1,\dotsc,K_r))$ is a hyper-spanning tree. Let $\full(K)$ denote
a minimum-cost full component for terminal set $K \subset R$, and let $C_K$ be its cost.
The
hypergraphic LP is as follows:
\begin{align}
\min &\qquad \sum_{K} C_Kx_K: \label{eq:LP-hyp}
\tag{\ensuremath{\mathcal{S}}} \\
\forall \varnothing\neq S\subseteq R: &\qquad \sum_{K:K \cap S \neq \varnothing} x_K(|K \cap S|-1) \le |S|-1 \notag \\
 &\qquad \sum_{K} x_K(|K|-1) =  |R|-1 \notag \\
\forall K:  &\qquad  x_K\ge 0  \notag
\end{align}
The integral solutions of \eqref{eq:LP-hyp} correspond to the
full component sets of Steiner trees. As an aside, the \emph{$r$-restricted full component} method (e.g.~\cite{GH+01b}) allows us to assume there are a polynomial number of full components while affecting the optimal Steiner tree cost by a $1+\epsilon$ factor. Then, it is possible to solve \eqref{eq:LP-hyp} in polynomial time~\cite{BGRS10,War97}.
Here is our goal:
\ignore{
From Dave: I took this out since it no longer is clear and explicit like before in my opinion. If we're not going to explain in a little more detail, then non-experts will not be able to follow it, and we shouldn't say something possibly confusing.
  We note that \eqref{eq:LP-hyp} can be separated in polynomial time~\cite{PVd03,BGRS10} using
maybe solved (approximately) in polynomial time, by reducing the
number of variables, and by applying the Ellipsoid method. Reducing
the number of variables also allows us to assume that the full
components corresponding to the variables of the above LP are pairwise
edge- and Steiner vertex disjoint. We omit the details and refer
the reader to \cite{GH+01b} for details. }

\begin{theorem}\cite{BGRS10}
  The integrality gap of the hypergraphic LP \eqref{eq:LP-hyp} is at
  most $1 + \ln{3}/2 \approx 1.55$.
\end{theorem}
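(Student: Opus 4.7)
\begin{proofof}{Theorem 1}

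The plan is to exhibit a randomized LP-rounding algorithm that, given an optimal solution $x^*$ to \eqref{eq:LP-hyp} of value $\lp$, produces a Steiner tree of expected cost at most $(1+\ln 3/2)\lp$. Following the hint in the abstract, the algorithm will be a randomized version of the Robins--Zelikovsky loss-contraction scheme: instead of greedily picking the full component of best gain-to-loss ratio, we would sample a full component from the distribution naturally induced by $x^*$.

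The proposed algorithm maintains an evolving terminal set $M$ in a loss-contracted working graph, initialized to $M = R$. At each iteration we sample a full component $K$ with probability proportional to $x^*_K(|K|-1)$, which normalizes to a probability distribution by the equality constraint of \eqref{eq:LP-hyp}. We then commit the edges of a minimum-loss subforest $\Loss(K) \subseteq \full(K)$ to the output, and contract them, merging the terminals spanned by $K$ into a single super-terminal of $M$. After a carefully chosen number of iterations the algorithm returns the accumulated losses together with an MST on the current contracted terminal set.

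The analysis would follow the Robins--Zelikovsky potential-function framework using the potential $\phi(M) := \mst(M) - \lp$. The two per-iteration ingredients we would establish are: (a) a drop lemma showing that one sampling-and-contraction step reduces $\phi$ by a $\Theta(1/(|R|-1))$ fraction in expectation, and (b) an aggregate loss bound that compares the incremental expected loss to $\lp$ and $\phi(M)$, obtained by substituting inequalities of the form $\loss(K) \le C_K/2$ into the sampling distribution together with the LP constraints. Given both, $\phi$ decays geometrically in expectation; summing the committed losses against this decay, adding the residual $\lp$ from the final MST, and optimizing the number of iterations would combine to give exactly the $1 + \ln 3/2$ factor.

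The hardest step will be lemma (a). The difficulty is that after several contraction steps the current state $M$ is no longer supported by any LP solution of comparable cost, so the original $x^*$ must continue to ``cover'' $\phi(M)$ across iterations. This is precisely where the hypergraphic structure of \eqref{eq:LP-hyp} is essential: the plan is to set up a bridge-style exchange argument that charges each heavy edge of $\mst(M)$ to some full component $K$ in the support of $x^*$ which, once loss-contracted, short-circuits that edge. Making the per-iteration expectation aggregate cleanly across the LP support, uniformly over all intermediate states $M$, is the technical heart of the proof.

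\end{proofof}
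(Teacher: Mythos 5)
Your high-level plan---a randomized variant of Robins--Zelikovsky that samples full components from a distribution derived from the LP, commits the losses, loss-contracts, and balances contraction gain against accumulated loss before optimizing the iteration count---is indeed the route the paper takes, and you correctly anticipate that the crux is a ``bridge-style exchange argument'' comparing a terminal spanning tree to the LP support. But that crux is left as a sketch, and it is precisely the step the paper supplies as the \emph{Bridge Lemma}: for any terminal spanning tree $T$ and any feasible $x$, $\sum_K x_K \drop_T(K) \ge c(T)$, proved by building a multigraph $H$ on $R$ from the drop edges of each $K$, showing the fractional weights $z_e = x_K$ are feasible in Edmonds' spanning-tree polytope, and observing that every spanning tree of $H$ costs at least $c(T)$ because each new edge is a max-cost edge on its fundamental cycle in $T$. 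Without this lemma (or an equivalent), your lemma~(a) is not established, and it is the heart of the argument. Note also that your stated concern---``after several contraction steps the current state $M$ is no longer supported by any LP solution''---is exactly what the Bridge Lemma dissolves: it holds for an \emph{arbitrary} terminal spanning tree, so the same $x^*$ certifies the drop bound at every iteration with no need to re-derive an LP solution for the contracted instance.

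There are also two concrete mismatches that would cause trouble if you pushed the sketch through. First, you propose sampling $K$ with probability proportional to $x^*_K(|K|-1)$; the paper samples proportional to $x^*_K$ (normalized by a constant $M \ge \sum_K x_K$, with a ``do nothing'' outcome absorbing the slack). With your weights, the expected drop becomes $\frac{1}{|R|-1}\sum_K x_K(|K|-1)\drop_T(K)$, and the Bridge Lemma does not directly lower-bound this expression---the extra $(|K|-1)$ factors spoil the clean inequality, and the expected cost of the sampled component no longer telescopes to $\lp^* - \loss^*$. Second, your potential $\phi(M) = \mst(M) - \lp$ is not the quantity that contracts geometrically; the paper's recursion is $\E[c(T_{i+1})] - (\lp^* - \loss^*) \le (1-\tfrac1M)\big(\E[c(T_i)] - (\lp^* - \loss^*)\big)$, so the correct ``fixed point'' is $\lp^* - \loss^*$, not $\lp^*$. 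This distinction matters: the $-\loss^*$ term is what, combined with Lemma~\ref{facts}(b) ($\loss(K) \le C_K/2$) and the $2\lp^*$ bound on $c(T_1)$, produces exactly the function $\tfrac12 + \tfrac32 e^{-\lambda} + \tfrac{\lambda}{2}$ whose minimizer $\lambda = \ln 3$ gives $1 + \tfrac{\ln 3}{2}$. With your potential, the accounting does not close to the target constant.

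\end{document}
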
\label{thm:main}

\section{Randomized Loss-Contracting Algorithm}

In this section we describe the algorithm. We introduce some
terminology first.  The \emph{loss} of full component $\full(K)$, denoted
by $\Loss(K)$, is a minimum-cost subset of $\full(K)$'s edges that connects the
Steiner vertices to the terminals. For example, Figure
\ref{fig:fc}(ii) shows the loss of the two full components in
bold. We let $\loss(K)$
denote the total cost of all edges in $\Loss(K)$.  The \emph{loss-contracted full
  component of $K$}, denoted by $\CL(K)$, is obtained from $\full(K)$ by contracting
its loss edges (see Figure \ref{fig:fc}(iii) for an example).

For clarity we make two observations. First, for each $K$ the edges of $\CL(K)$ correspond to the edges
of $\full(K) \bs \Loss(K)$. Second, for terminals $u, v$, there may be a $uv$ edge
in several $\CL(K)$'s but we think of them as distinct parallel edges. 

Our randomized rounding algorithm, \RL, is shown below. We
choose $M$ to have value at least $\sum_K x_K$ such that
$t = M\ln 3$ is integral. $\MST(\cdot)$ denotes a minimum
spanning tree and $\mst$ its cost.

\vspace{3ex}\noindent
\begin{boxedminipage}{\algobox}
{\bf Algorithm \RL.}
\begin{algorithmic}[1]
  \STATE Let $T_1$ be a minimum spanning tree of the induced graph $G[R]$.
  \STATE $x \leftarrow$ Solve \eqref{eq:LP-hyp} \label{rl:lp}
  \FOR{$1 \leq i \leq t$}
  \STATE Sample $K_i$ from the distribution\footnote{$K_i \leftarrow \varnothing$
    with probability $1-\sum_Kx_K/M$.} with probability $\frac{x_K}{M}$ for each full component $K$. \label{rl:sample}
  \STATE $T_{i+1} \leftarrow \MST(T_i \cup \CL(K_i))$ \label{rl:mst}
  \ENDFOR
  \STATE Output any Steiner tree in $ALG := T_{t+1} \cup \bigcup_{i=1}^t \Loss(K_i)$.
\end{algorithmic}
\end{boxedminipage}
\vspace{0.75ex}

To prove that $ALG$ actually contains a Steiner tree, we must show all terminals are connected. To see this, note each edge $uv$ of $T_{t+1}$ is either a terminal-terminal edge of $G[R]$ in the input instance, or else $uv \in \CL(K_i)$ for some $i$ and therefore a $u$-$v$ path is created when we add in $\Loss(K_i)$.

\section{Analysis}

In this section we prove that the
tree's cost is at most $1 + \frac{\ln 3}{2}$ times the optimum
value of \eqref{eq:LP-hyp}.
Each iteration of the main loop of algorithm \RL\ first samples a full
component $K_i$ in step \ref{rl:sample}, and subsequently recomputes a
minimum-cost spanning tree in the graph obtained from adding the
loss-contracted part of $K_i$ to $T_{i}$.  The new spanning tree
$T_{i+1}$ is no more expensive than $T_i$; some of its
edges are replaced by newly added edges in $\CL(K_i)$. Bounding the
drop in cost will be the centerpiece of our analysis, and this step
will in turn be facilitated by the elegant {\em Bridge Lemma} of
Byrka et al.~\cite{BGRS10}. We describe this lemma first.

\begin{figure}
  \begin{center}
    \includegraphics[scale=.8]{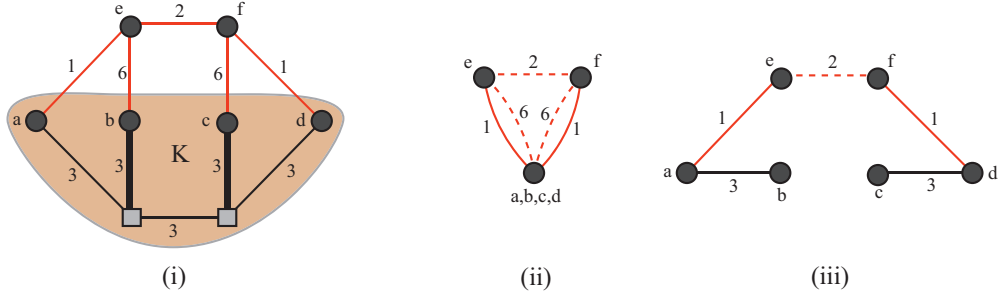}
  \end{center}
  \caption{\label{fig:drop} In (i) we show a terminal spanning tree
    $T$ in red, and a full component spanning terminal set $K \subset \{a,b,c,d\}$ in black; thick edges are its loss. In (ii) we show $T/K$, and $\Drop_T(K)$ is shown as dashed edges. In (iii) we show $\MST(T \cup \CL(K))$.}
\end{figure}

We first define the \emph{drop} of a full component $K$ with respect to a
terminal spanning tree $T$ (it is just a different name for the
bridges of~\cite{BGRS10}). Let $T/K$ be the graph obtained
from $T$ by identifying the terminals spanned by $K$.
Then let
$$ \Drop_T(K) := E(T)\setminus E(\MST(T/K)), $$
be the set of edges of $T$ that are not contained in a minimum
spanning tree of $T/K$, and $\drop_T(K)$ be its cost. We illustrate this in Figure
\ref{fig:drop}. We state the Bridge Lemma
here and present its proof for completeness.

\begin{lemma}[Bridge Lemma~\cite{BGRS10}]\label{lem:bridge}
Given a terminal spanning tree $T$ and a feasible solution $x$ to \eqref{eq:LP-hyp},
\begin{equation}
\sum_{K} x_K\drop_T(K) \ge c(T).\label{eq:fooo}
\end{equation}
\end{lemma}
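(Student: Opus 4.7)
My plan is a layered accounting argument combined with a Kruskal calculation. Set $n := |R|$, order the edges of $T$ in decreasing cost as $e_1, \dotsc, e_{n-1}$ with $c_i := c(e_i)$ and $c_n := 0$, and write $E_j := \{e_1, \dotsc, e_j\}$. Two summation-by-parts identities,
\[
c(T) = \sum_{j=1}^{n-1}(c_j - c_{j+1})\,j \qquad\text{and}\qquad \drop_T(K) = \sum_{j=1}^{n-1}(c_j - c_{j+1})\,|\Drop_T(K)\cap E_j|,
\]
combined with $c_j - c_{j+1} \ge 0$, reduce \eqref{eq:fooo} to showing, for each $j \in \{1, \dotsc, n-1\}$, that $\sum_K x_K\,|\Drop_T(K) \cap E_j| \ge j$.

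For the second step I will derive an exact formula for $|\Drop_T(K) \cap E_j|$. Let $S_1, \dotsc, S_{j+1}$ be the vertex sets of the components of the forest $T \setminus E_j$, and set $p(K) := |\{i : K \cap S_i \ne \varnothing\}|$; I claim $|\Drop_T(K) \cap E_j| = p(K) - 1$. To see this, run Kruskal on $T/K$ in increasing-cost order. The algorithm first exhausts the cheap edges in $T \setminus E_j$, producing a spanning forest of the multigraph $(T \setminus E_j)/K$, which has $n - |K| + 1$ vertices. Contracting $K$ fuses precisely the $p(K)$ components of $T \setminus E_j$ that meet $K$ into the single component containing $v_K$, so this multigraph has $j + 2 - p(K)$ components. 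Kruskal then adds $j + 1 - p(K)$ edges from $E_j$ to complete a spanning tree of $T/K$, whence $|\Drop_T(K) \cap E_j| = j - (j + 1 - p(K)) = p(K) - 1$.

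Finally, I invoke LP-feasibility. Applying the inequality constraint of \eqref{eq:LP-hyp} to each $S_i$ and summing, and using that the $S_i$'s partition $R$ (so $\sum_i |K \cap S_i| = |K|$ and $\sum_{i : K \cap S_i \ne \varnothing} 1 = p(K)$), gives
\[
\sum_K x_K\bigl(|K| - p(K)\bigr) \le \sum_{i=1}^{j+1}(|S_i| - 1) = n - j - 1.
\]
Subtracting this from the equality constraint $\sum_K x_K(|K| - 1) = n - 1$ yields $\sum_K x_K(p(K) - 1) \ge j$, which combined with $|\Drop_T(K) \cap E_j| = p(K) - 1$ proves the reduced inequality and hence the lemma.

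The most delicate point will be the Kruskal calculation in paragraph two: one must verify that after exhausting the cheap edges, the algorithm really has built a spanning forest of $(T\setminus E_j)/K$ with exactly $j + 2 - p(K)$ components, even when contracting $K$ creates multi-edges or loops within some $S_i$. This holds because loops contribute nothing to either MST edges or component counts, non-loop cheap edges yield a max spanning forest by Kruskal's correctness, and edge-cost ties can be broken by fixing the ordering that defines $E_j$.
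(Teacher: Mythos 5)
Your proof is correct, and it takes a genuinely different route from the paper's. The paper invokes Edmonds' theorem on the integrality of the spanning tree polytope: it builds an auxiliary multigraph $H$ on $R$ in which each full component $K$ contributes the edges of $\Drop_T(K)$ reattached as a spanning tree of $K$, gives those copies $z$-weight $x_K$, verifies $z \in \eqref{graphic}$, and concludes because every edge of $H$ is a maximum-cost edge on its cycle in $T\cup\{e\}$, so $T$ is an MST of $T \cup H$ and every spanning tree of $H$ costs at least $c(T)$. You instead carry out a layered (threshold) decomposition of $c(T)$ and $\drop_T(K)$ by cost, reduce \eqref{eq:fooo} to a per-layer counting inequality $\sum_K x_K|\Drop_T(K)\cap E_j|\ge j$, compute $|\Drop_T(K)\cap E_j| = p(K)-1$ exactly via a Kruskal run, and then apply the hypergraphic constraints to the partition $\{S_i\}$ of $R$ induced by that layer. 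Your argument is more elementary (it needs no polyhedral integrality theorem) and makes the combinatorics explicit via the formula $p(K)-1$; the paper's version is shorter once Edmonds' theorem is granted and fits their exposition because Lemma~\ref{facts}(a) reuses the same polytope. One small presentational caveat: the identity $|\Drop_T(K)\cap E_j|=p(K)-1$ holds for the Kruskal MST under your tie-breaking, not a priori for an arbitrary $\MST(T/K)$; this is harmless since $\drop_T(K)$ is the same for every MST of $T/K$ (all its edges come from $T$), so one may take $\MST(T/K)$ to be that Kruskal tree without loss of generality — you gesture at this with the tie-breaking remark, but it deserves an explicit sentence.
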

\begin{proof}
    The proof needs the following theorem of Edmonds
    \cite{Ed71}: given a graph $H=(R,F)$, the extreme points of the
    polytope
    \begin{equation}
      \{z\in {\mathbb R}^F_{\ge 0}   :
      \sum_{(u,v)\in F: u\in S,v\in S}z_e \le |S|-1 \quad \forall S\subset R, \quad
            \sum_{e\in F} z_e = |R|-1\} \tag{$\mathcal{G}$} \label{graphic}
    \end{equation}
    are the indicator variables of spanning trees of $H$.
    The proof strategy is as follows. We construct a
    multigraph $H=(R,F)$ with costs $c$, and $z \in {\mathbb R}^F$ such that:
    the cost of $z$ equals
    the left-hand side of \eqref{eq:fooo}; $z \in \eqref{graphic}$;
    and all spanning trees of $H$ have cost at least $c(T)$. Edmonds' theorem then immediately
    implies the lemma. In the rest of the proof we define $H$ and supply the three parts of this strategy.


    For each full component $K$ with $x_K>0$, consider the
    edges in $\Drop_T(K)$.  Contracting all edges of $E(T)\setminus
    \Drop_T(K)$, we see that $\Drop_T(K)$ corresponds to edges of a
    spanning tree of $K$.  These edges are copied (with the same cost
    $c$) into the set $F$, and the copies are given weight $z_e =
    x_K$. Using the definition of drop, one can show each $e \in F$ is a
    maximum-cost edge in the unique cycle of $T \cup \{e\}$.
    
    Having now defined $F$, we see
    \begin{equation}\label{eq:xz}
      \sum_{e\in F}c_ez_e = \sum_K x_K\drop_T(K).
    \end{equation}
    Note that we introduce $|K|-1$ edges for each full component $K$,
    and that, for any $S \subseteq R$, at most $|S \cap K|-1$ of these
    have both ends in $S$. These two observations together with the
    fact that $x$ is feasible for \eqref{eq:LP-hyp} directly imply
    that $z$ is feasible for \eqref{graphic}.

    To show all spanning trees of $H$ have cost at least $c(T)$, it suffices to show $T$ is an MST of $T \cup H$.
    In turn, this follows (e.g.~\cite[Theorem 50.9]{Sc03}) from the fact that each $e \in F$ is a
    maximum-cost edge in the unique cycle of $T \cup \{e\}$.
  \end{proof}

We also need two standard facts that we summarize in the following
lemma. They rely on the input costs satisfying the triangle inequality, and that internal nodes
of full components have degree at least 3, both of which hold without loss of generality.

\begin{lemma} \label{facts}
  (a) The value $\mst(G[R])$ of the initial terminal spanning tree
    computed by algorithm \RL\ is at most twice the optimal value of
    \eqref{eq:LP-hyp}. (b) For any full component $K$, $\loss(K) \leq C_K/2$.
\end{lemma}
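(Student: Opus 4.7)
For part (b), the plan is to root the full component $\full(K)$ at an arbitrary terminal and exhibit a specific edge set whose inclusion in the ``loss'' is valid, then bound its cost. Namely, for every internal (Steiner) vertex $v$, let $e_v$ denote a cheapest edge from $v$ to its children in the rooted tree. Following child edges down from any Steiner vertex must eventually reach a leaf of $\full(K)$, which is a terminal, so the set $\{e_v : v \text{ Steiner}\}$ connects every Steiner vertex to a terminal and is therefore a feasible loss candidate. Because each Steiner vertex has degree at least $3$ in $\full(K)$, it has at least two children in the rooted tree, so $c(e_v)$ is at most half the total cost of $v$'s child-edges. Summing, and noting that the child-edges of Steiner vertices account for all of $\full(K)$ except possibly the single edge descending from the root terminal, gives $\loss(K) \le C_K/2$.

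For part (a), the plan is to produce, for each full component $K$, a spanning tree $P_K$ of $K$ inside $G[R]$ of cost at most $2C_K$, and then to exhibit $z := \sum_K x_K \cdot \mathbf{1}[P_K]$ as a feasible point of the spanning-tree polytope \eqref{graphic} on $G[R]$. The first step is a standard doubling/shortcutting construction: doubling all edges of $\full(K)$ yields an Eulerian multigraph of cost $2 C_K$; traversing an Euler tour and shortcutting between consecutive occurrences of terminals (legal under the triangle inequality) produces a Hamiltonian path $P_K$ on the terminal set $K$ with $c(P_K) \le 2C_K$.

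To verify feasibility of $z$ for \eqref{graphic} on $G[R]$, I would check the two constraint classes against the corresponding constraints of \eqref{eq:LP-hyp}. For any $S \subseteq R$, since $P_K$ is a tree on the vertex set $K$, the number of its edges with both endpoints in $S$ is at most $|S \cap K|-1$ when $S \cap K \ne \varnothing$, and $0$ otherwise; multiplying by $x_K$ and summing gives $\sum_e z_e \cdot \mathbf{1}[e \subseteq S] \le \sum_{K : K \cap S \ne \varnothing} x_K(|K\cap S|-1) \le |S|-1$, using the subtour inequality of \eqref{eq:LP-hyp}. Similarly, $\sum_e z_e = \sum_K x_K(|K|-1) = |R|-1$ from the LP's equality constraint. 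Thus $z$ lies in \eqref{graphic}, and by Edmonds' theorem the MST of $G[R]$ has cost at most $c(z) \le \sum_K x_K \cdot 2C_K = 2 \cdot \text{OPT}_{\text{LP}}$.

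The only mildly subtle step is the feasibility check in part (a); everything else is routine. I expect this step to go through cleanly because the hypergraphic LP's subtour and equality constraints were essentially engineered so that fractional combinations of full-component spanning trees give points in the terminal spanning-tree polytope.
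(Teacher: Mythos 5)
Your proposal is correct. For part (a) it follows the paper's argument essentially verbatim: double-and-shortcut each $\full(K)$ into a terminal spanning tree of $K$ of cost at most $2C_K$, weight it by $x_K$, verify feasibility in Edmonds' polytope \eqref{graphic} from the constraints of \eqref{eq:LP-hyp}, and conclude via Edmonds' theorem. (Your use of a Hamiltonian path rather than a generic spanning tree is an immaterial specialization.) For part (b) the paper simply cites Lemma 4.1 of Gr\"opl et al.; you instead reproduce the standard direct argument --- root at a terminal, pick the cheapest child-edge at each Steiner vertex, observe these edges connect all Steiner vertices to leaves, and use that degree $\ge 3$ gives each Steiner vertex at least two children so the chosen edge costs at most half its child-edges. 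This is correct (and is in fact the argument behind the cited lemma), just spelled out rather than deferred.
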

\begin{proof}See Lemma 4.1 in \cite{GH+01b} for a proof of (b). For (a) we use a shortcutting argument along with Edmonds' polytope \eqref{graphic} for the graph $H = G[R]$. In detail, let $x$ be an optimal solution to \eqref{eq:LP-hyp}. For each $K$, shortcut a tour of $\full(K)$ to obtain a spanning tree of $K$ with $c$-cost at most twice $C_K$ (by the triangle inequality) and add these edges to $F$ with $z$-value $x_K$. Like before, since $x$ is feasible for \eqref{eq:LP-hyp}, $z$ is feasible for \eqref{graphic}, and so there is a spanning tree of $G[R]$ whose $c$-cost is at most $\sum_{e \in F}c_e z_e \le 2\sum_K C_K x_K$.
\end{proof}

We are ready to prove the main theorem.

\begin{proofof}{Theorem \ref{thm:main}}
  Let $x$ be an optimal solution to \eqref{eq:LP-hyp} computed in step
  \ref{rl:lp}, define $\lp^*$ to be its objective value, and
  $$ \loss^* = \sum_K x_K\loss(K) $$
  its fractional loss.
  Our goal will be to derive upper bounds on the expected cost of tree
  $T_i$ maintained by the algorithm at the beginning of iteration $i$.
  \ignore{DP: I didn't understand why you phrased it this way... First consider the case where we deterministically pick full
  component $K_i$ in this iteration.}
  After selecting $K_i$, one possible candidate spanning
  tree of $T_i \cup \CL(K_i)$ is given by the edges of $T_i \setminus
  \Drop_{T_i}(K_i) \cup \CL(K_i)$, and thus
  \begin{equation}\label{eq:oldc}
    c(T_{i+1}) \le c(T_i) - \drop_{T_i}(K_i) + c(\CL(K_i)).
  \end{equation}
  \ignore{DP: I don't think this level of commentary is beneficial. Figure \ref{fig:drop}(iii) (with $T=T_i$ and $K=K_i$) illustrates
  the fact that the right-hand side of this inequality can sometimes
  be strictly larger than $c(T_{i+1})$; the terminal spanning tree
  given in the figure has cost $10$, while the bound on the right on
  right of \eqref{eq:oldc} is $11$.}

 Let us bound the expected value of $T_{i+1}$, given any
  fixed $T_i$. Due to the distribution from which $K_i$ is drawn, and
  using \eqref{eq:oldc} with linearity of expectation, we have
  $$E[c(T_{i+1})] \le c(T_i) - \frac{1}{M}\sum_K x_K\drop_{T_i}(K) + \frac{1}{M}\sum_K x_K(C_K-\loss(K)) .$$
  Applying the bridge lemma on the terminal spanning tree $T_i$, and
  using the definitions of $\lp^*$ and $\loss^*$, we have
  \begin{align}
    \E[c(T_{i+1})] 
    & \le (1 - \tfrac{1}{M})\E[c(T_i)] + (\lp^* - \loss^*)/M \notag
  \end{align}
  By induction this gives
  \begin{align}
    \E[c(T_{t+1})] &= (1 - \tfrac{1}{M})^t c(T_1) + (\lp^* - \loss^*)(1 - (1-\tfrac{1}{M})^t) \notag\\
    & \le \lp^*(1 + (1 - \tfrac{1}{M})^t) - \loss^*(1 - (1 - \tfrac{1}{M})^t). \notag
  \end{align}
  where the inequality uses Lemma \ref{facts}(a).
  The cost of the final Steiner tree is at most $c(ALG) \le c(T_{t+1}) + \sum_{i=1}^t \loss(K_i)$. Moreover,
  \begin{align}
    \E[c(ALG)] \le &~ \E[c(T_{t+1})] + t\cdot\loss^*/M  \notag \\
    \le & ~\lp^*(1 + (1 - \tfrac{1}{M})^t) + \loss^*((1 - \tfrac{1}{M})^t + \tfrac{t}{M} - 1) \notag\\
    \le& ~\lp^*\bigg(\frac{1}{2} + \frac{3}{2}\Big(1 - \frac{1}{M}\Big)^t +\frac{t}{2M}\bigg)
    \notag \\
    \mathop{\le} & ~\lp^*(1/2 + 3/2 \cdot \exp(-t/M) +t/2M)
    \notag
  \end{align}
  where the third inequality uses (a weighted average of) Lemma
  \ref{facts}(b).  The last line explains our choice of $t = M \ln 3$
  since $\lambda=\ln 3$ minimizes
  $\frac{1}{2}+\frac{3}{2}e^{-\lambda}+\frac{\lambda}{2}$, with value
  $1+\frac{\ln 3}{2}$. Thus the algorithm outputs a Steiner tree of
  expected cost at most $(1+\frac{\ln 3}{2})\lp^*$, which implies the
  claimed upper bound of $1+\frac{\ln 3}{2}$ on the integrality gap.
\end{proofof}

\medskip

We now discuss a variant of the result just proven.
A Steiner tree instance is \emph{quasi-bipartite} if there are no Steiner-Steiner edges. For quasibipartite instances, Robins and Zelikovsky tightened the analysis of their algorithm to show it has approximation ratio $\alpha$, where $\alpha \approx 1.28$ satisfies $\alpha = 1+\exp(-\alpha)$). Here, we'll show an integrality gap bound of $\alpha$ (the longer proof of~\cite{BGRS10} via the Robins-Zelikovsky algorithm can be similarly adapted). We can refine Lemma \ref{facts}(a) (like in~\cite{RZ05}) to show that in quasi-bipartite instances, $\mst(G[R]) \le 2(\lp^* - \loss^*)$. Continuing along the previous lines, we obtain
\begin{align*}
\E[c(ALG)]  \le  \lp^*(1+\exp(-t/M)) + \loss^*(t/M-1-\exp(-t/M))
\end{align*}
and setting $ t = \alpha M$ gives $\E[c(ALG)] \le \alpha \cdot \lp^*$, as needed. We note that in quasi-bipartite instances the hypergraphic relaxation is equivalent~\cite{CKP10} to the so-called \emph{bidirected cut relaxation} thus we get an $\alpha$ integrality gap bound there as well.

At the risk of numerology, we conclude by remarking that  $1+\frac{\ln3}{2}$  arose in two very different ways, by analyzing different algorithms (and similarly for $\alpha \approx 1.28$). A simple explanation for this phenomenon would be very interesting.

\bibliographystyle{plain}
\bibliography{biblio}

\end{document}